\def\theequation{\arabic{section}.\arabic{equation}}
\newcommand{\be}{\begin{equation}}
\newcommand{\en}{\end{equation}}
\newcommand{\bea}{\begin{eqnarray}}
\newcommand{\ena}{\end{eqnarray}}
\newcommand{\beano}{\begin{eqnarray*}}
\newcommand{\enano}{\end{eqnarray*}}
\newcommand{\bee}{\begin{enumerate}}
\newcommand{\ene}{\end{enumerate}}
\newcommand{\Rc}{{\cal{R}}}
\newcommand{\mc}{\mathcal}
\newcommand{\D}{{\mc D}}
\newcommand{\V}{{\cal V}}
\newcommand{\LL}{\mc L}
\newcommand{\Sc}{{\cal S}}
\newcommand{\E}{{\cal E}}
\newcommand{\F}{{\cal F}}
\newcommand{\G}{{\cal G}}
\newcommand{\Lc}{{\cal L}}
\newcommand{\C}{{\cal C}}
\newcommand{\1}{1 \!\! 1}
\newcommand{\LD}{{\LL}^\dagger (\D)}
\newcommand{\Hil}{\mc H}
\newcommand{\EE}{\mc E}
\newcommand{\Kc}{\mc K}
\renewcommand{\l}{\langle}
\renewcommand{\r}{\rangle}
\newcommand{\pin}[2]{\l#1 , #2\r}
\newtheorem{thm}{Theorem}
\newtheorem{lemma}[thm]{Lemma}
\newtheorem{prop}[thm]{Proposition}
\newtheorem{defn}[thm]{Definition}
\newenvironment{proof}{\noindent {\bf Proof --}}{\hfill$\square$ \vspace{3mm}\endtrivlist}
\begin{document}

\thispagestyle{empty}

\vspace*{2cm}

\begin{center}
{\Large \bf Abstract ladder operators for non self-adjoint Hamiltonians, with applications}   \vspace{2cm}\\

{\large F. Bagarello}\\
  Dipartimento di Ingegneria,
Universit\`a di Palermo,\\ I-90128  Palermo, Italy\\
and I.N.F.N., Sezione di Catania\\
e-mail: fabio.bagarello@unipa.it\\

\end{center}

\vspace*{2cm}

\begin{abstract}
\noindent 
Ladder operators are useful, if not essential, in the analysis of some given physical system since they can be used to find easily eigenvalues and eigenvectors of its Hamiltonian. In this paper we extend our previous results on abstract ladder operators considering in many details what happens if the Hamiltonian of the system is not self-adjoint. Among other results, we give an existence criterion for coherent states constructed as eigenstates of our lowering operators. In the second part of the paper we discuss two different examples of our framework: pseudo-quons and a deformed generalized Heisenberg algebra. Incidentally, and interestingly enough, we show that pseudo-quons can be used to diagonalize an oscillator-like Hamiltonian written in terms of (non self-adjoint) position and momentum operators which obey a deformed commutation rule of the kind often considered in minimal length quantum mechanics.
\end{abstract}

\vspace{2cm}


\vfill


\newpage

\section{Introduction}

In quantum mechanics one of the essential steps to be undertaken is to find eigenvalues and eigenstates of the Hamiltonian $H$ of the physical system one needs to describe, $\Sc$. There are several reasons for this. The first reason is that eigenstates represent the stationary states of $\Sc$, i.e. those states which, if $\Sc$ is {\em prepared} in one of them, maintain $\Sc$ unchanged during its time evolution, if no further (external) action occurs. The second is that the eigenvalues represent the only energy values allowed for $\Sc$. The third is that the set of the eigenstates is, quite often, a basis for $\Hil$, the Hilbert space where $\Sc$ is defined. It is clear, then, that  eigenvalues and eigenstates of $H$ are quite important when dealing with $\Sc$. For this reason, since the birth of quantum mechanics,
different strategies have been proposed to find new solvable Hamiltonians:  interwining operators, \cite{intop1,intop2,intop3},  supersymmetric quantum mechanics, \cite{CKS,jun}, are just two of them. Another well-known startegy makes use of ladder operators, like those appearing in the harmonic oscillator or, in  second quantization and in elementary particles, to deal with bosons and with fermions. There are thousands of books and papers dealing with bosonic and with fermionic operators, and here we only refer to \cite{roman}. Ladder operators also exist in other contexts, like in many models driven by non self-adjoint Hamiltonians, see \cite{mosta,benbook,bagspringerbook,baginbagbook}. In these cases, the point is often that the lowering and the raising operators are not one the adjoint of the other. This creates a lot of freedom, but also many additional, and maybe unexpected, mathematical difficulties. This is possibly the reason why this line of research has become so popular and active in the past few decades, and not only among physicists.

Some years ago, Fernandez started to set up an {\em algebraic treatment} of different quadratic Hamiltonians, not necessarily self-adjoint, \cite{fern1,fern2,fern3}, based on the possibility that, given some Hamiltonian $H$,  one can find an operator $Z$ such that $[H,Z]=\lambda Z$. When this happens, many interesting results can be deduced. In particular, $Z$ turns out to be a ladder operator, and its powers, acting on some {\em seed} eigenstate of $H$, $\hat\varphi$, produces other eigenvectors of $H$, corresponding to different eigenvalues.

In a recent paper, \cite{bagALO}, we have extended Fernandez's results in several ways. In particular, we have considered some classes of {\em abstract ladder operators} (ALOs) useful in different situations, and connected with different Hamiltonians, self-adjoint or not. For each ALO we have shown how to construct eigenvalues and eigenvectors of the related Hamiltonian, and we have proposed some examples arising from pseudo-bosons in one or two dimensions, from quons and from generalized Heisenberg algebra. In particular, we have considered the following situations, all living on a certain Hilbert space $\Hil$, with scalar product $\langle.,.\rangle$, and with related norm $\|.\|=\sqrt{\langle.,.\rangle}$.

\begin{enumerate}
	\item  $H_0$ is a self-adjoint operator acting on $\Hil$, and $Z$ is a second operator on $\Hil$ satisfying the equality
	\be
	[H_0,Z]=\lambda Z,
	\label{11}\en
	for some $\lambda\in\mathbb{R}$. This is essentially Fernandez's case, \cite{fern1,fern2,fern3}.
	
	\item  $H_0$ is again self-adjoint, $H_0=H_0^\dagger$, and $Z$ obeys the following commutation rule:
	\be
	[H_0,Z]=\lambda\, Z [Z^\dagger,Z],
	\label{12}\en
	for some real number $\lambda$. This is what happens, in particular, if $H_0$ is factorizable in terms of $Z$ and $Z^\dagger$, as in $H_0=\lambda Z Z^\dagger$.
	
	\item  $H\neq H^\dagger$ now, and we have a set of operators $Z_j$, $j=1,2,\ldots,N$ such that:
	\be
	[H,Z_j]=\lambda_j Z_j,
	\label{13}\en
	$\lambda_j\in\mathbb{C}$, for $j=1,2,\ldots,N$. This extends (\ref{11}) to several ALOs  and to the case of non self-adjoint $H$.

\end{enumerate}

What is clearly missing, in this list, is the case of (\ref{12}) for an Hamiltonian which is not self-adjoint, $H\neq H^\dagger$. Indeed, this situation was not considered in \cite{bagALO}. In this paper we fill the gap, by considering this particular situation, also in view of its possible applications to the interesting case of pseudo-quons, \cite{bagquons}, and of a particular form of deformed generalized Heisenberg algebra, \cite{bcg}. We will also show how and when it is possible to introduce bi-coherent states in the present settings, \cite{bagspringerbook}, which is a {\em classical} problem to consider in presence of annihilation operators of any kind, \cite{aagbook}-\cite{ABG}.

The paper is organized as follows:

In the next section we will extend formula (\ref{12}) to an Hamiltonian $H$ which is not self-adjoint. We will show that, under some mild assumptions, it is still possible to define ladder operators for $H$ and for $H^\dagger$, and that these operators can be used to construct two families of biorthonormal vectors which are respectively eigenstates of $H$ and of $H^\dagger$. We will also derive the expression of their eigenvalues.

In Section \ref{sect3} we show how to use these families to construct bi-coherent states of the kind considered in \cite{bagspringerbook}, and we give conditions for this to be possible.

In Section \ref{sectDQ} we describe a first class of examples fitting the general construction of our ALOs. This class is based on pseudo-quons, \cite{bagquons}, which are operators obeying a deformed version of the canonical commutation and anti-commutation relations (CCR and CAR) which depends on a parameter $q$, usually taken in the interval $[-1,1]$. Here we will show that $q$ needs not to be in this range, or even to be real. In this perspective our results for quons will be rather general and new, with respect to those existing in the literature so far. We will also connect our pseudo-quons with a special version of the quantum harmonic oscillator, for system with minimal length, and show how these pseudo-quons can be used to diagonalize the (non self-adjoint) Hamiltonian of the system.

Section \ref{sectDGHA} contains another class of examples of ALOs, arising from what has been called {\em deformed generalized Heisenberg algebra}, (DGHA), in \cite{bcg}.

Our conclusions are given in Section \ref{sectconcl}, while a short Appendix is included in the paper to clarify some algebraic aspects of our ALOs, particular useful when these are unbounded operators. A second Appendix concerning graphene is also given as an example of a physical Hamiltonian not bounded from below.

\section{Extending (\ref{12}) to $H\neq H^\dagger$}\label{sect2}

In this section we focus on the possibility of {\em merging} the results deduced in \cite{bagALO} from (\ref{12}) and (\ref{13}). It will not be a surprise to see that this is not so trivial, and requires some effort.

In the following we will deal, most of the time, with three operators $H$, $T$ and $S$. We will always assume to be in one of the following conditions:
\begin{itemize}
	\item[(c1)] $H$, $T$ and $S$ belong to the *-algebra $\Lc^\dagger(\D)$, for some suitable $\D$, see Appendix 1.
	\item[(c2)] it exists a subspace $\D\subseteq\Hil$, dense in $\Hil$, which is stable under the action of $H$, $T$ and $S$ and their adjoints.
\end{itemize}
In both these conditions, we can work with combinations of these operators, their adjoints, and their powers. For instance, $[H,T]=HT-TH$ makes sense as an element of $\Lc^\dagger(\D)$, or simply because when it acts on a vectors $f\in\D$ we get another element of $\D$, just because $HTf=H(Tf)\in\D$ and $THf=T(Hf)\in\D$. Of course, this is not an issue if, say $T$ and $H$ are bounded, since they are defined in (or be extended to) the whole $\Hil$\footnote{As an example of what is meant here that {\em $\D$ is stable} under the action of some given operators, we could consider $c=\frac{1}{\sqrt{2}}\left(x+\frac{d}{dx}\right)$, $c^\dagger=\frac{1}{\sqrt{2}}\left(x-\frac{d}{dx}\right)$ and $H=c^\dagger x$, as for the (shifted) quantum harmonic oscillator. In this case the set $\Sc(\mathbb{R})$ of the test functions ($C^\infty$ functions, decaying to zero, together with all its derivatives, faster than any inverse power of $x$) is stable under the action of each of these operators: $cf(x)\in\Sc(\mathbb{R})$, $c^\dagger f(x)\in\Sc(\mathbb{R})$ and $Hf(x)\in\Sc(\mathbb{R})$ for all $f(x)\in\Sc(\mathbb{R})$. Similar stability can be found in other, and also quite different, situations.}. 

\begin{defn}\label{def1}
	Let $\lambda\in\mathbb{C}$ be a given complex number. We say that $(H,T,S)\in\Rc_\lambda$ if 
	\be
		[H,S]=\lambda S[T,S].
	\label{21}\en
\end{defn}
It is clear that (\ref{21}) extends (\ref{12}), in the sense that (\ref{21}) reduces to (\ref{12}) if $H=H^\dagger$ and if $T=S^\dagger$. In general, in our present settings, there is no reason a priori to require that $\lambda$ is real. As in \cite{bagALO} we will call $T$, $S$, and their adjoints, ALOs.

We have that:

\begin{prop}\label{prop1}
	The following statemes are all equivalent:
	\begin{enumerate}
		\item[(p1)] $[H,S]=\lambda S[T,S]$;
		\item[(p2)] $[H,S^n]=\lambda S[T,S^n]$, $\forall n\geq0$;
		\item[(p3)] $[H^\dagger,S^\dagger]=\overline{\lambda} [T^\dagger,S^\dagger]S^\dagger$;
		\item[(p4)] $[H^\dagger,(S^\dagger)^n]=\overline{\lambda} [T^\dagger,(S^\dagger)^n]S^\dagger$, $\forall n\geq0$.
	\end{enumerate}	
\end{prop}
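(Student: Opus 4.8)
The plan is to establish the four equivalences through three basic two-way implications, namely (p1)$\Leftrightarrow$(p2), (p3)$\Leftrightarrow$(p4), and (p1)$\Leftrightarrow$(p3); since each is itself an equivalence, chaining them together makes all four statements mutually equivalent. Throughout I would work inside the *-algebra $\Lc^\dagger(\D)$ (condition (c1)) or on the common stable dense domain $\D$ (condition (c2)), where products, sums, scalar multiples, adjoints and commutators of $H$, $T$, $S$ all make sense and obey the usual algebraic rules. This is precisely the setting that legitimises every manipulation below, and the only place where genuine care is needed.

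For (p1)$\Rightarrow$(p2) I would argue by induction on $n$. The case $n=0$ is trivial, since both sides vanish ($[H,\mathbb{I}]=0=\lambda S[T,\mathbb{I}]$), and $n=1$ is exactly (p1). For the inductive step I write $S^{n+1}=S\,S^n$ and use the derivation (Leibniz) property of the commutator, $[H,S^{n+1}]=[H,S]\,S^n+S\,[H,S^n]$. Substituting (p1) into the first term and the inductive hypothesis into the second gives $\lambda S[T,S]\,S^n+\lambda S^2[T,S^n]=\lambda S\big([T,S]S^n+S[T,S^n]\big)$, and the bracketed expression is exactly $[T,S^{n+1}]$, again by the Leibniz rule. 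Hence $[H,S^{n+1}]=\lambda S[T,S^{n+1}]$, closing the induction. The reverse implication (p2)$\Rightarrow$(p1) is immediate by setting $n=1$.

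The equivalence (p3)$\Leftrightarrow$(p4) follows by an entirely parallel induction, now applied to $H^\dagger$, $T^\dagger$, $S^\dagger$ and the scalar $\overline{\lambda}$. The only bookkeeping difference is that in (p3) the factor $S^\dagger$ sits on the \emph{right} of the commutator, so I would write $(S^\dagger)^{n+1}=(S^\dagger)^n\,S^\dagger$ and expand $[H^\dagger,(S^\dagger)^{n+1}]=[H^\dagger,(S^\dagger)^n]S^\dagger+(S^\dagger)^n[H^\dagger,S^\dagger]$; inserting (p3) and the inductive hypothesis and recognising $[T^\dagger,(S^\dagger)^{n+1}]$ via the same Leibniz identity then closes the step.

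Finally, (p1)$\Leftrightarrow$(p3) is obtained simply by taking adjoints, which is an involution on $\Lc^\dagger(\D)$. Using $(AB)^\dagger=B^\dagger A^\dagger$ and $(\lambda A)^\dagger=\overline{\lambda}\,A^\dagger$, the adjoint of the left-hand side of (p1) is $[H,S]^\dagger=-[H^\dagger,S^\dagger]$, while the adjoint of the right-hand side is $\overline{\lambda}\,[T,S]^\dagger S^\dagger=-\overline{\lambda}\,[T^\dagger,S^\dagger]S^\dagger$; equating these and cancelling the common sign yields exactly (p3), and since the adjoint is involutive the implication reverses. I do not expect a deep obstacle in this proposition: its content is just the derivation property of the commutator together with the involutive behaviour of the adjoint. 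The only point requiring attention—and the reason conditions (c1)/(c2) are imposed—is that all the operators involved share a common invariant domain, so that the products, powers and adjoints appearing in the computations are bona fide operators rather than merely formal symbols.
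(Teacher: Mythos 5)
Your proof is correct and follows essentially the same route as the paper: the paper proves $(p1)\Rightarrow(p2)$ by exactly this induction-plus-Leibniz argument (writing $S^{n+1}=S^nS$ where you write $S\,S^n$, an immaterial difference) and leaves the remaining equivalences to the reader as ``easily proved.'' Your explicit treatment of those remaining parts --- the parallel induction for $(p3)\Leftrightarrow(p4)$ and the adjoint involution argument for $(p1)\Leftrightarrow(p3)$, with due attention to the invariant domain $\D$ --- is the natural completion the paper intends.
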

\begin{proof}
	We only check that from $(p1)$ follows $(p2)$, using induction on $n$. The other statements can be proved easily.
	
	First we notice that $(p2)$ for $n=0$ is trivial, while for $n=1$ $(p2)$ is simply condition $(p1)$. Now, if we suppose that $[H,S^n]=\lambda S[T,S^n]$ is true for a fixed $n$, assuming also $(p1)$, we have
	$$
	[H,S^{n+1}]=[H,S^n]S+S^n[H,S]=\lambda S[T,S^n]S+\lambda S^n(S[T,S])=
	$$
	$$
	=\lambda S\left([T,S^n]S+S^n[T,S]\right)=\lambda S[T,S^{n+1}],
	$$
	which is what we had to prove.
\end{proof}

 From now on we will assume that the {\em lowest} (in modulus, if needed) eigenvalue of $H$ is zero. This is not always a major constraint. In fact, if this is not the case for the given $H'$, meaning that $H'\Phi=\alpha\Phi$, $0<|\alpha|<\infty$, we can still consider $H=H'-\alpha\1$, and $H\Phi=0=0\Phi$, so that $\Phi$ is an eigenstate on $H$ with eigenvalue zero. Moreover, if $(H',T,S)\in\Rc_\lambda$, then $(H,T,S)\in\Rc_\lambda$ as well, since $[H,T]=[H',T]=\lambda S[T,S]$.
 
 \vspace{2mm}
 
 {\bf Remark:--} It might be interesting to stress that such a shift does not always allow to recover what needed here, and in particular that a ground state exists with eigenvalue zero. This is not possible, when $H'$ is not bounded from below (and from above, of course). This is the case, for instance, of Graphene, see \cite{geim} and Appendix 2, but not only, see \cite{kow} for a particle on a circle. 
 
 \vspace{2mm}
 
 A second assumption on $H$ is that it has all eigenvalues with multiplicity one. Also, we will ask the following:
 \begin{equation}
 	[H,[T,S]]=0.
\label{22} \end{equation}

\vspace{2mm}

{\bf Remark:--} We should mention that these conditions are satisfied in many situations. For instance, if $S$ and $T$ are pseudo-bosons, then $[T,S]=\1$, and this commutes with $H$, of course. Also, if $T=a$, $S=b$ and $H=ba$, where $[a,b]_q=ab-qba$ are pseudo-quons, \cite{bagquons}, (\ref{22}) is satisfied. We will return on this particular example in Section \ref{sectDQ}. Also, Theorem \ref{th1} below describes a general case in which (\ref{22}) is automatically satisfied. As for the multiplicity of the eigenvalues, we refer to Sections \ref{sectDQ} and \ref{sectDGHA} for two large class of examples, where this request is satisfied.

\vspace{2mm}

To fix our settings, from now on we will work under the assumption (c2) given at the beginning of this section. We have the following:

\begin{thm}\label{th1}
	Let $(H,S,T)\in\Rc_\lambda$, and let us assume that (\ref{22}) holds and that all the eigenvalues of $H$ are non degenerate. Suppose further that a non zero $\varphi_0\in\D$ exists such that $H\varphi_0=0$. Then, calling
	\be
	\varphi_n=S^n\varphi_0,
	\label{23}\en
	$n\geq0$, and assuming they are all non zero, we put $\F_\varphi=\{\varphi_n, \,\forall n\geq0\}$. Hence we have
	\be
	[T,S]\varphi_n=\mu_n\varphi_n, \qquad \mu_n=\frac{\langle\varphi_n,[T,S]\varphi_n\rangle}{\|\varphi_n\|^2},
	\label{24}\en
	and
	\be
	H\varphi_n=E_n\varphi_n,
	\label{25}\en
	where
	\be
	E_0=0, \qquad E_n=E_{n-1}+\lambda \mu_{n-1}=\lambda\sum_{k=0}^{n-1}\mu_k, \quad n\geq 1.
	\label{26}\en

\end{thm}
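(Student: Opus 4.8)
The plan is to prove the two eigenvalue relations (\ref{24}) and (\ref{25}) simultaneously by induction on $n$, exploiting the commutativity hypothesis (\ref{22}) together with the non-degeneracy of the spectrum of $H$; the recursion (\ref{26}) will then drop out by telescoping.

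First I would isolate the single structural fact that drives everything: if $\psi\in\D$ is a non-zero eigenvector of $H$ with eigenvalue $E$, then $[T,S]\psi$ is again an eigenvector of $H$ for the \emph{same} eigenvalue $E$. This is immediate from (\ref{22}), since $H([T,S]\psi)=[T,S](H\psi)=E\,[T,S]\psi$. Because $E$ has multiplicity one by assumption, the corresponding eigenspace is one-dimensional, so $[T,S]\psi$ must be a scalar multiple of $\psi$, say $[T,S]\psi=\mu\psi$; pairing with $\psi$ and using $\psi\neq0$ forces $\mu=\langle\psi,[T,S]\psi\rangle/\|\psi\|^2$. Applied to $\psi=\varphi_n$ this yields exactly (\ref{24}), \emph{provided} we already know that $\varphi_n$ is an eigenvector of $H$.

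The induction then runs as follows. For the base case $n=0$, the hypothesis $H\varphi_0=0$ gives (\ref{25}) with $E_0=0$, and the structural fact above (with $E=0$) produces (\ref{24}) for $n=0$. For the inductive step, assume $H\varphi_n=E_n\varphi_n$. The structural fact gives $[T,S]\varphi_n=\mu_n\varphi_n$, and then I compute $H\varphi_{n+1}$ directly from the defining relation of $\Rc_\lambda$: writing $HS=SH+\lambda S[T,S]$ and acting on $\varphi_n$,
\be
H\varphi_{n+1}=HS\varphi_n=S(H\varphi_n)+\lambda S([T,S]\varphi_n)=(E_n+\lambda\mu_n)\varphi_{n+1},
\en
since $S$ pulls the scalars $E_n$ and $\mu_n$ through. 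Thus $\varphi_{n+1}$ is an eigenvector of $H$ with $E_{n+1}=E_n+\lambda\mu_n$, which both closes the induction (the structural fact now delivers (\ref{24}) at level $n+1$) and establishes the recursion in (\ref{26}); iterating $E_{n}=E_{n-1}+\lambda\mu_{n-1}$ from $E_0=0$ gives the closed form $E_n=\lambda\sum_{k=0}^{n-1}\mu_k$.

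The computation itself is short; the delicate point — and the only place the hypotheses are genuinely used — is the passage from ``$[T,S]\varphi_n$ lies in the $E_n$-eigenspace'' to ``$[T,S]\varphi_n$ is proportional to $\varphi_n$''. This is precisely where non-degeneracy of the eigenvalues is indispensable: without it one could only conclude that $[T,S]$ leaves each eigenspace invariant, not that $\varphi_n$ is itself an eigenvector, and the scalar $\mu_n$ of (\ref{24}) would not even be well defined. I would also flag that the standing assumption (c2) is what legitimises all the manipulations: it guarantees $\varphi_n=S^n\varphi_0\in\D$ and that the operator products $HS$, $S[T,S]$ and $[T,S]\varphi_n$ act within $\D$, so that each identity above is an honest equality of vectors in $\Hil$ rather than a merely formal one.
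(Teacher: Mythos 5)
Your proof is correct and follows essentially the same route as the paper: the commutation relation (\ref{22}) together with non-degeneracy forces $[T,S]\varphi_n$ to be proportional to $\varphi_n$, and then $HS=SH+\lambda S[T,S]$ applied to $\varphi_n$ yields the recursion $E_{n+1}=E_n+\lambda\mu_n$ by induction. Your only departure is organizational: you run (\ref{24}) and (\ref{25}) as a simultaneous induction, which makes explicit -- a bit more cleanly than the paper's presentation -- that the proportionality statement (\ref{24}) at level $n$ presupposes that $\varphi_n$ is already known to be an eigenvector of $H$.
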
 

\begin{proof}
	First of all, we observe that (\ref{24}) is a simple consequence of (\ref{22}), and of the fact that all the eigenvalues of $H$ are non degenerate. Hence $[T,S]\varphi_n$, if it is not zero, must be an eigenvector of $H$ with eigenvalue $E_n$ and, therefore, $[T,S]\varphi_n$ must be proportional to $\varphi_n$. We call $\mu_n$ this proportionality constant\footnote{It might be that $[T,S]\varphi_n=0$. In this case, again $[T,S]\varphi_n$ is proportional to $\varphi_n$, with $\mu_n=0$. Again, (\ref{24}) is satisfied, but it is less interesting. We will assume all throughout this paper $\mu_n\neq0$ for all $n\geq0$.}. Now, taking the scalar product of this equality with $\varphi_n$, and using the fact that $\varphi_n\neq0$, we conclude the proof of (\ref{24}).

	As for (\ref{25}) and (\ref{26}), we use induction on $n$, starting with $n=0$, which is obviously true.

	Let us now assume that (\ref{25}) and (\ref{26}) hold for some given $n$. We want to prove that the same formulas hold if we replace $n$ with $n+1$. For that we write
	$$
	H\varphi_{n+1}=HS\varphi_n=([H,S]+SH)\varphi_n=(\lambda S[T,S]+SH)\varphi_n=(\lambda \mu_n+E_n)S\varphi_n,
	$$
	which is exactly formula (\ref{25}) with $E_{n+1}=E_n+\lambda \mu_n$, as in (\ref{26}).

\end{proof}

\vspace{2mm}

{\bf Remark:--} First we observe that the stability of $\D$ under the action of, say, $S$, implies that $\varphi_n\in\D$ for all $n\geq0$. But $\D$ is also stable under the action of $H$ and $T$, and their adjoints, so that $[S,H]:\D\rightarrow\D$. In principle it could happen that, for some given $n_0\in\mathbb{N}$, $S^{n_0}\varphi_0\neq0$ while $S^{n_0+1}\varphi_0=0$. In this case, we could still set up a strategy similar to the one we will describe here, but with some changes. However, we will not consider this case here, focusing only on the case in which each $\varphi_n$ in (\ref{23}) is different from zero.

\vspace{2mm}

Due to the fact that $H\neq H^\dagger$, it is natural to ask if the same results above, or similar, can be restated for $H^\dagger$. This indeed can be done if we work under the assumptions of Theorem \ref{th1}, and if we further assume that $\F_\varphi$ is a basis for $\Hil$. In this case, in fact, an unique other basis of $\Hil$, $\F_\psi=\{\psi_n,\, n\geq0\}$, exists which is biorthonormal to $\F_\varphi$:
\begin{equation}
	\langle\varphi_n,\psi_m\rangle=\delta_{n,m},
\end{equation}
for all possible $n$ and $m$. We refer to \cite{chri,heil} for this result. In particular, any $f\in\Hil$ can be expanded as follows:
\be
f=\sum_{n=0}^\infty\langle \varphi_n,f\rangle\,\psi_n=\sum_{n=0}^\infty\langle \psi_n,f\rangle\,\varphi_n.
\label{27}\en
What is also quite relevant for us is that, using the completeness of $\F_\varphi$, it is possible to prove, using standard ideas, see e.g. \cite{bagspringerbook}, that
\be
H^\dagger \psi_n=\overline{E_n}\psi_n,
\label{28}\en
and
\begin{equation}
	S^\dagger \psi_0=0, \qquad S^\dagger \psi_n=\psi_{n-1}, \quad n\geq1.
\label{29}\end{equation}
Hence we see that, while $S$ acts as a raising operator for $\F_\varphi$, its adjoint, $S^\dagger$, behaves as a lowering operator on $\F_\psi$. We also observe that the eigenvalues of $H^\dagger$, as those of $H$, are non degenerate. These results are similar to others we have deduced in different situations, again in presence of some non self-adjoint Hamiltonian. As for equation (\ref{24}), we can prove the following:

\begin{lemma}\label{lemma1}
	The equations in (\ref{24}) are equivalent to 
	\be
		[T^\dagger,S^\dagger]\psi_n=-\overline{\mu_n}\psi_n,
	\label{210}\en
	for all $n\geq0$.
\end{lemma}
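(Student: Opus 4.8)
The plan is to reduce the whole statement to a single operator identity and then transport it between the two biorthonormal families by means of the completeness expansion (\ref{27}). First I would record the purely algebraic fact that, in the involutive setting of (c1)/(c2), the adjoint of a commutator reverses it:
\be
[T,S]^\dagger=(TS-ST)^\dagger=S^\dagger T^\dagger-T^\dagger S^\dagger=-[T^\dagger,S^\dagger].
\en
Hence the claimed identity (\ref{210}) is equivalent to $[T,S]^\dagger\psi_n=\overline{\mu_n}\,\psi_n$ for all $n\geq0$, and it suffices to prove that this in turn is equivalent to (\ref{24}), i.e. to $[T,S]\varphi_n=\mu_n\varphi_n$ for all $n$.

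For the direction (\ref{24})$\Rightarrow$(\ref{210}) I would assume $[T,S]\varphi_m=\mu_m\varphi_m$ and test the vector $[T,S]^\dagger\psi_n$ against the family $\F_\varphi$. Using the defining property of the $\dagger$-operation on $\D$ together with biorthonormality, and recalling that the product $\langle.,.\rangle$ is conjugate-linear in its first slot (as is forced by the formula for $\mu_n$ in (\ref{24})),
\be
\langle\varphi_m,[T,S]^\dagger\psi_n\rangle=\langle[T,S]\varphi_m,\psi_n\rangle=\overline{\mu_m}\,\langle\varphi_m,\psi_n\rangle=\overline{\mu_n}\,\delta_{m,n}.
\en
Feeding these coefficients into the expansion $f=\sum_m\langle\varphi_m,f\rangle\,\psi_m$ of (\ref{27}), applied to $f=[T,S]^\dagger\psi_n$, collapses the series to the single surviving term $\overline{\mu_n}\,\psi_n$, which is exactly the desired relation.

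The converse (\ref{210})$\Rightarrow$(\ref{24}) is entirely symmetric: assuming $[T,S]^\dagger\psi_m=\overline{\mu_m}\,\psi_m$, I would compute
\be
\langle\psi_m,[T,S]\varphi_n\rangle=\langle[T,S]^\dagger\psi_m,\varphi_n\rangle=\mu_m\,\langle\psi_m,\varphi_n\rangle=\mu_n\,\delta_{m,n},
\en
and then use the other half of (\ref{27}), namely $f=\sum_m\langle\psi_m,f\rangle\,\varphi_m$, with $f=[T,S]\varphi_n$, to recover $[T,S]\varphi_n=\mu_n\varphi_n$.

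I do not expect the algebra to be the obstacle; the only genuinely delicate point is the functional-analytic bookkeeping when the operators are unbounded. One must know that $\psi_n$ (resp. $\varphi_n$) lies in the domain on which $[T,S]^\dagger$ (resp. $[T,S]$) acts, and that $[T,S]^\dagger\psi_n\in\Hil$ so that the series of (\ref{27}) may legitimately be applied to it. This is precisely what working hypothesis (c2) supplies: $\D$ is stable under $T$, $S$ and their adjoints, so that both $[T,S]$ and $[T,S]^\dagger=[S^\dagger,T^\dagger]$ are well defined on $\D$ and the pairing $\langle[T,S]f,g\rangle=\langle f,[T,S]^\dagger g\rangle$ holds for all $f,g\in\D$. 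Since $\varphi_n\in\D$ by (\ref{23}) and the $\psi_n$ lie in $\D$ as well, the two matrix-element computations above are justified and the equivalence follows.
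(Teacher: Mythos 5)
Your proposal is correct and follows essentially the same route as the paper: both transport the eigenvalue equation (\ref{24}) across the duality pairing using biorthonormality, and then recover (\ref{210}) from the completeness of $\F_\varphi$ (the paper shows that $\left([T^\dagger,S^\dagger]+\overline{\mu_n}\right)\psi_n$ is orthogonal to every $\varphi_m$, while you equivalently read off the expansion coefficients of $[T,S]^\dagger\psi_n$ via (\ref{27})). The only differences are cosmetic: you also spell out the converse implication and the unbounded-operator domain bookkeeping, which the paper leaves implicit.
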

\begin{proof}
	Let us take $n,m\geq0$, and let us consider
	$$
	\langle[T^\dagger,S^\dagger]\psi_n,\varphi_m\rangle=-\langle\psi_n,[T,S]\varphi_m\rangle=-\mu_m\langle\psi_n,\varphi_m\rangle=-\mu_m\delta_{n,m}=
	$$
	$$
	=-\mu_n\langle\psi_n,\varphi_m\rangle=-\langle\overline{\mu_n}\,\psi_n,\varphi_m\rangle,
	$$
	where we have used (\ref{24}). Now, for each fixed $n$, we have, $\langle([T^\dagger,S^\dagger]-\overline{\mu_n})\psi_n,\varphi_m\rangle=0$, $\forall m\geq0$. Formula (\ref{210}) follows then from the completeness of $\F_\varphi$.
	
\end{proof}

Incidentally we observe that, comparing (\ref{210}) and (\ref{24}), we have the following identity:
\be
\langle\hat\varphi_n,[T,S]\hat\varphi_n\rangle=\langle\hat\psi_n,[T,S]\hat\psi_n\rangle,
\label{212}\en
$\forall n\neq0$. Here $\hat\varphi_n=\frac{\varphi_n}{\|\varphi_n\|}$, and $\hat\psi_n=\frac{\psi_n}{\|\psi_n\|}$.

\vspace{2mm}

Going now back to Proposition \ref{prop1}, we observe that all the commutators considered there in the left-hand sides involve $H$, $S$, and their adjoints. The operator $T$, in any of its forms, only appears in the right-hand sides of $(p1)-(p4)$. Hence, it is interesting to consider commutators like $[H,T]$, $[H^\dagger,T^\dagger]$, and so on. This is also because, if $H$ can be factorized as $H=\lambda ST$, then the fact that $(p1)$ is satisfied is trivial. But, in the very same way, we also deduce that $[H,T]=\lambda[S,T]T$, and $[H^\dagger,T^\dagger]=\overline{\lambda} T^\dagger[S^\dagger,T^\dagger]$. This suggests to refine Definition \ref{def1} as follows:
\begin{defn}\label{def2}
	A triple $(H,T,S)\in \Rc_\lambda$ is in  ${\Rc}_\lambda^{(s)}$ if 
	\begin{equation}
		[H,T]=\lambda[S,T]T.
	\label{213}\end{equation}
\end{defn}
Here the suffix $(s)$ stands for {\em strong}. It follows that $(H,T,S)\in {\Rc}_\lambda^{(s)}$ if and only if, for instance, 
\begin{equation}
	[H^\dagger,S^\dagger]=\overline{\lambda} [T^\dagger,S^\dagger]S^\dagger, \qquad\mbox{and}\qquad [H^\dagger,T^\dagger]=\overline{\lambda} T^\dagger[S^\dagger,T^\dagger].
\label{214}\end{equation}
We will now show that, if Definition \ref{def2} is satisfied,  $T^\dagger$ acts as a raising operator for $\F_\psi$, while its adjoint $T$ acts as a lowering operator for $\F_\varphi$. More explicitly we will now show that a sequence of complex numbers $\{\gamma_n\}$ can be found such that
\be
T^\dagger\psi_n=\gamma_n\psi_{n+1},
\label{215}\en
$n\geq0$, and
\be
T\varphi_0=0, \qquad T\varphi_n=\overline{\gamma_{n-1}}\varphi_{n-1}, \quad n\geq1.
\label{216}\en

To prove (\ref{215}) we first observe that $T^\dagger H^\dagger\psi_n=\overline{E_n}T^\dagger\psi_n$, because of (\ref{28}). Also, because of (\ref{214}) and of (\ref{210}), we have
$$
[H^\dagger,T^\dagger]\psi_n=\overline{\lambda} T^\dagger[S^\dagger,T^\dagger]\psi_n=\overline{\lambda} T^\dagger\left(\overline{\mu_{n}}\psi_{n}\right)=\overline{\lambda\,\mu_{n}} (T^\dagger\psi_{n}).
$$ 
Hence we have, using (\ref{26}),
$$
H^\dagger\left(T^\dagger\psi_n\right)=\left([H^\dagger,T^\dagger]+T^\dagger H^\dagger\right)\psi_n=\overline{E_{n+1}}T^\dagger\psi_{n}.
$$
This means that $T^\dagger\psi_{n}$ is an eigenstate of the non-degenerate eigenvalue of $H^\dagger$, $\overline{E_{n+1}}$, so that it must be proportional to $\psi_{n+1}$, as in (\ref{215}).

As for (\ref{216}), this can be proved with a similar technique as in Lemma \ref{lemma1}, using the completeness of $\F_\psi$.

From the ladder equations deduced so far we also find that
\be
ST\varphi_n=\overline{\gamma_{n-1}}\varphi_{n}, \qquad  TS\varphi_n=\overline{\gamma_{n}}\varphi_{n},
\label{217}\en
and
\be
S^\dagger T^\dagger\psi_n={\gamma_{n}}\psi_{n}, \qquad  T^\dagger S^\dagger\psi_n={\gamma_{n-1}}\varphi_{n},
\label{218}\en
$n\geq0$, with the agreement that $\gamma_{-1}=0$. Of course these results are coherent with (\ref{22}), and with the analogous for the adjoint operators. In fact we find easily that 
\be
[H,ST]=[H,TS]=[H^\dagger,S^\dagger T^\dagger]=[H^\dagger,T^\dagger S^\dagger]=0
\label{219}\en
Moreover, we deduce that
\be
[S,T]\varphi_n=\left(\overline{\gamma_{n-1}}-\overline{\gamma_{n}}\right)\varphi_n, \qquad [S^\dagger,T^\dagger]\psi_n=\left(\gamma_{n}-\gamma_{n-1}\right)\psi_n,
\label{220}\en
which can be rewritten, in a bra-ket language,
as
\be
[S,T]=\sum_{n=0}^\infty\left(\overline{\gamma_{n-1}}-\overline{\gamma_{n}}\right)|\varphi_n\rangle\langle\psi_n|, \qquad [S^\dagger,T^\dagger]=\sum_{n=0}^\infty\left(\gamma_{n}-\gamma_{n-1}\right)|\psi_n\rangle\langle\varphi_n|.
\label{221}\en
If $\gamma_n-\gamma_{n-1}=\Gamma$, independent of $n$, then we conclude that, since $\F_\varphi$ and $\F_\psi$ are biorthogonal bases, $[S,T]=-\overline{\Gamma}\,\1$, and $[S^\dagger,T^\dagger]=\Gamma\,\1$.

\section{Introducing bi-coherent states for our ALOs}\label{sect3}

So far, we have no particular reason to fix $\{\gamma_n\}$: this is just a complex-valued sequence arising from our discussion above, see (\ref{215}). In what follows we will show how the various $\gamma_n$'s must be chosen in order to introduce (well-defined) vectors in $\Hil$ which are eigenstates of the lowering operators $T$ and $S^\dagger$ considered in our settings. In other words, we would like to construct, following the same general procedure proposed in \cite{bagspringerbook}, two vectors $\varphi(z), \psi(z)\in\Hil$,  $z\in \E$, such that
\be
T\varphi(z)=z\varphi(z), \qquad S^\dagger\psi(z)=z\psi(z).
\label{31}\en
Here $\E$ is some {\em sufficiently large} subset of $\mathbb{C}$, to be identified. As usual, we look for solutions of (\ref{31}) of the following type:
\be
\varphi(z)=N_\varphi(z)\sum_{n=0}^\infty\alpha_n z^n\varphi_n, \qquad \psi(z)=N_\psi(z)\sum_{n=0}^\infty\beta_n z^n\psi_n,
\label{32}\en
where $N_\varphi(z)$ and $N_\psi(z)$ are some $z$-dependent normalization for the states, while $\{\alpha_n\}$ and $\{\beta_n\}$ are complex-valued sequences to be fixed. 

Using (\ref{216}) with simple computations we find that
$$
T\varphi(z)=z\left(N_\varphi(z)\sum_{n=0}^\infty\alpha_{n+1}\overline{\gamma_n} z^n\varphi_n\right),
$$
which is equal to $z\varphi(z)$ if the following relation is satisfied $\forall n\geq0$:
\be
\alpha_{n+1}\overline{\gamma_n}=\alpha_n.
\label{33}\en
If $\gamma_0=0$, then (\ref{33}) implies that $\alpha_0=0$. Moreover, from (\ref{216}) we deduce that $T\varphi_0=T\varphi_1=0$, while from (\ref{215}) we find also that $T^\dagger \psi_0=0$, which should be added to the annihilation rule in (\ref{29}), $S^\dagger \psi_0=0$. These two facts are unusual ($T$ has two vacua, and $\psi_0$ is annihilated by two, in general different, operators, $T^\dagger$ and $S^\dagger$). To avoid these situations, which however could have some interest and could be considered anyhow\footnote{However, we will not do it here.}, we suppose that $\gamma_n\neq0$ for all $n\geq0$. When this is the case, we deduce that
\be
\alpha_n=\frac{\alpha_0}{\overline{\gamma_{n-1}!}},
\label{34}\en
$\forall n\geq1$. Here we have defined $\gamma_k!=\gamma_0\gamma_1\cdots\gamma_k$, $k\geq1$. In the rest of the paper we will define $\gamma_{-1}!=1$, and we take $\alpha_0=1$. This is not restrictive, since the normalization of $\varphi(z)$ is still to be fixed in some way by $N_\varphi(z)$. Summarizing we get
\be
\varphi(z)=N_\varphi(z)\sum_{n=0}^\infty \frac{z^n}{\overline{\gamma_{n-1}!}}\varphi_n,
\label{35}\en
for all $z\in\E$, still to be identified.

In a similar way, from $S^\dagger\psi(z)=z\psi(z)$, we find the following expression for $\psi(z)$:
\be
  \psi(z)=N_\psi(z)\sum_{n=0}^\infty z^n\psi_n,
\label{36}\en
since we can deduce that $\beta_0=\beta_1=\beta_2=\ldots$, in (\ref{32}), and we are fixing $\beta_0=1$. Once we have found these formal expressions for $\varphi(z)$ and $\psi(z)$ we want to make these formulas rigorous. More explicitly, we want to find conditions for the above series to converge in some region of $\mathbb{C}$. We proceed as in \cite{bagspringerbook}, and references therein, giving a (rather mild) sufficient condition which ensures tha convergence of the series in (\ref{35}) and (\ref{36}). For that we assume that four positive constant exist, $A_\varphi, A_\psi, r_\varphi, r_\psi$, and two strictly positive sequences $\{M_n(\varphi)\}$ and $\{M_n(\psi)\}$, such that
\be
\lim_{n,\infty}\frac{M_n(\varphi)}{M_{n+1}(\varphi)}=M(\varphi), \qquad \lim_{n,\infty}\frac{M_n(\psi)}{M_{n+1}(\psi)}=M(\psi),
\label{37}\en
with $M(\varphi)>0$ and $M(\psi)>0$ satisfying the following inequalities:
\be
\|\varphi_n\|\leq A_\varphi r_\varphi^n M_n(\varphi), \qquad \|\psi_n\|\leq A_\psi r_\psi^n M_n(\psi),
\label{38}\en
$\forall n$. Then we have
$$
\left\|\sum_{n=0}^\infty \frac{z^n}{\overline{\gamma_{n-1}!}}\varphi_n\right\|\leq \sum_{n=0}^\infty \frac{|z|^n}{|\gamma_{n-1}!|}\|\varphi_n\|\leq A_\varphi \sum_{n=0}^\infty \frac{(r_\varphi|z|)^n}{|\gamma_{n-1}!|}\,M_n(\varphi),
$$
which is a power series in $(r_\varphi|z|)$. Calling $\gamma=\lim_n|\gamma_n|$, we can easily check that the series converge whenever $|z|<\frac{\gamma M(\varphi)}{r_\varphi}$, which could be all of $\mathbb{C}$ if $M(\varphi)=\infty$ or $\gamma=\infty$. Similarly, the series $\sum_{n=0}^\infty z^n\psi_n$ converges when $|z|<\frac{M(\psi)}{r_\psi}$. To conclude that $\varphi(z)$ and $\psi(z)$ are well defined (in the common region of $\mathbb{C}$ in which both bounds on $z$ are satisfied), we need to understand how $N_\varphi(z)$ and $N_\psi(z)$ must be defined. As usual, \cite{bagspringerbook}, we require that $\varphi(z)$ and $\psi(z)$ satisfy a sort of {\em mutual normalization}:
$$
1=\langle\varphi(z),\psi(z)\rangle=\overline{N_\varphi(z)}\,N_\psi(z)\sum_{n,k=0}^\infty \frac{\overline z^n z^k}{\gamma_{n-1}!}\langle\varphi_n,\psi_k\rangle=\overline{N_\varphi(z)}\,N_\psi(z)\,\Gamma(z),
$$
where we have introduced
\be\Gamma(z)=\sum_{n=0}^\infty\frac{|z|^{2n}}{\gamma_{n-1}!},
\label{39}\en
which converges if $|z|<\sqrt{\gamma}$. Of course, for all those $z$ satisfying this inequality and for which $\Gamma(z)\neq0$, we conclude that
\be
\overline{N_\varphi(z)}\,N_\psi(z)=\frac{1}{\Gamma(z)}.
\label{310}\en
{\bf Remark:--} We should stress once more that, since $\gamma_n$ is not necessarily real, $\Gamma(z)$ is not necessarily a real function, even if it only depends on $|z|$ (or, to be even more explicit, on $|z|^2$).
\vspace{2mm}
We are now in a position to describe what $\E$ must be: indeed we have $\E=C_\rho(0)$, the circle centered in the origin and of radius $\rho$, where
$$
\rho=\min\left\{\frac{\gamma M(\varphi)}{r_\varphi},\frac{M(\psi)}{r_\psi},\sqrt{\gamma}\right\}.
$$
Now, taken $f,g\in\Hil$, we have
$$
\int_{C_\rho(0)}d\nu(z,\overline z)\,\langle f,\psi(z)\rangle\langle \varphi(z),g\rangle=\sum_{n,m=0}^\infty \frac{1}{\overline{\gamma_{n-1}!}}\,\langle f,\psi_n\rangle\langle \varphi_m,g\rangle\int_{C_\rho(0)}d\nu(z,\overline z)\,\frac{z^n\,\overline z^m}{\Gamma(z)}.
$$
If we now put $d\nu(z,\overline z)=\Gamma(z)\,d\lambda(r)\,d\theta$, $r\in[0,\rho[$ and $\theta\in[0,2\pi[$, and we integrate out the angular part, we conclude that
$$
\int_{C_\rho(0)}d\nu(z,\overline z)\,\langle f,\psi(z)\rangle\langle \varphi(z),g\rangle=2\pi\sum_{n=0}^\infty \frac{1}{\overline{\gamma_{n-1}!}}\,\langle f,\psi_n\rangle\langle \varphi_n,g\rangle\,\int_0^\rho\,d\lambda(r)r^{2n}=$$
$$=\sum_{n=0}^\infty\langle f,\psi_n\rangle\langle \varphi_n,g\rangle=\langle f,g\rangle,
$$
since $\F_\varphi$ and $\F_\psi$ are biorthogonal bases. However, this is possible only if we can find a {\em generalized measure} $d\lambda(r)$ such that
$$
\int_0^\rho\,d\lambda(r)r^{2n}=\frac{1}{2\pi}\,\overline{\gamma_{n-1}!},
$$
$\forall n\geq0$. In particular, this implies that $\int_0^\rho\,d\lambda(r)=0$. This already clarify that   $d\lambda(r)$ cannot be a measure in the usual sense. This is also clear since the integral $\int_0^\rho\,d\lambda(r)r^{2n}$, for $n\geq1$, can give complex results.

Summarizing, $d\nu(z,\overline z)$ must be chosen properly, and there is no general reason ensuring that this can be done. Hence, resolution of the identity for the bi-coherent states is an open problem, while their proper definition, and the fact that they are eigenstates of $T$ and $S^\dagger$, is granted for all $z\in\C_\rho(0)$.

\section{Pseudo-quons}\label{sectDQ}
In this section we will discuss a concrete example fitting well what we have discussed in Section \ref{sect2}. This example is based on pseudo-quons, \cite{bagspringerbook,bagquons}, but in a slightly revised version, where the $q$-parameter, usually restricted in the real range $[-1,1]$, can assume complex values as well. This possibility, in our knowledge,  was not considered in the literature before, and open new possibilities.

More explicitly, {\em ordinary} quons arise from an operator $c$ which, together with its adjoint $c^\dagger$, satisfies the following $q$-mutator rule: $cc^\dagger-qc^\dagger c=\1$, \cite{fiv,green}. This implies, taking the adjoint of both members of this equation, that $cc^\dagger-\overline q c^\dagger c=\1$ as well, so that $(\overline q-q)c^\dagger c=0$, which is possible only if $q\in\mathbb{R}$.

As in \cite{bagspringerbook,bagquons}, we rather consider a deformed version of the $q$-mutator:
\be
[a,b]_q=ab-qba=\1,
\label{41}\en
where $a$ is, in general, different from $b^\dagger$. As already discussed in Section \ref{sect2}, see also \cite{bagquons}, formula (\ref{41}) should be understood in general in the sense of unbounded operators. This means that we should have either a common dense domain of $\Hil$, $\D$, stable under the action of $a$, $b$, and their adjoints, or an algebraic settings where these operators naturally {\em live}, as the *-algebra $\Lc^\dagger(\D)$, see  Appendix 1. We work here in one of these conditions. 

It is interesting to notice that, going from quons to pseudo-quons, allows us to deal also with complex values of $q$. This is what we will do in the first part of this section. More explicitly, we will deduce some results arising from (\ref{41}) under the general assumption that $q\in\mathbb{C}$. To do this, we will follow essentially the same ideas described in \cite{bagquons}. In fact, most of what we are going to show here is a simple extension of the results in \cite{bagquons}.

We first assume that two non zero vectors $\Phi_0,\Psi_0$ exist in $\D$  such that
\be
a\Phi_0=b^\dagger\Psi_0=0.
\label{42}\en
Next we define
\be
\Phi_n=\alpha_n\, b\,\Phi_{n-1}=\alpha_n!\,b^n\,\Phi_0, \qquad \mbox{and}\qquad \Psi_n=\beta_n\, a^\dagger\,\Psi_{n-1}=\beta_n!\,{a^\dagger}^n\,\Psi_0,
\label{43}\en
for $n\geq1$, and the sets $\F_\Phi=\{\Phi_n,\,n\geq0\}$ and $\F_\Psi=\{\Psi_n,\,n\geq0\}$. We have introduced here $\alpha_n!=\alpha_1\alpha_2\cdots\alpha_n$ and $\beta_n!=\beta_1\beta_2\cdots\beta_n$, for all $n\geq1$. This is slightly different from what we have done in (\ref{34}), where the factorial included also $\gamma_0$. Notice that these quantities could be complex, in principle. We further put $\alpha_0!=\gamma_0!=1$. We introduce
\be
[n]_q=1+q+q^2+\ldots+q^{n-1}=\frac{1-q^n}{1-q},
\label{44}\en
with the agreement that $[0]_q=0$ (which, by the way, is what we get from the right-hand side of this formula). Then, putting $N=ba$, we can check that
\be
N\Phi_n=[n]_q\Phi_n, \qquad N^\dagger\Psi_n=\overline{[n]_q}\,\Psi_n,
\label{45}\en
for all $n\geq0$. The proof is similar to that in \cite{bagquons} and will not be given here. A standard consequence of (\ref{45}) is that
\be
\langle\Psi_n,\Phi_m\rangle=0, \qquad \forall n\neq m.
\label{46}\en
This is because, if $q\neq1$, $[n]_q=[m]_q$ if and only if $n=m$. Formula (\ref{46}) can be refined requiring first that the two vacua $\Psi_0$ and $\Phi_0$ satisfy the following:
\be
\langle\Psi_0,\Phi_0\rangle=1.
\label{47}\en
If we further assume that
\be
\overline{\beta_n}\,\alpha_n\,[n]_q=1,
\label{48}\en
for all $n\geq1$, we conclude that
\be
\langle\Psi_n,\Phi_m\rangle=\delta_{n,m},
\label{49}\en
$\forall n,m\geq0$. Of course, due to (\ref{46}), we only have to prove that $\langle\Psi_n,\Phi_n\rangle=1$ for all $n\geq0$. This is true (by construction) for $n=0$. Now, let us assume that $\langle\Psi_n,\Phi_n\rangle=1$ for a fixed $n$. Then we have
$$
\langle\Psi_{n+1},\Phi_{n+1}\rangle=\overline{\beta_{n+1}}\,\alpha_{n+1}\langle a^\dagger\Psi_n,b\,\Phi_n\rangle=\overline{\beta_{n+1}}\,\alpha_{n+1}\langle \Psi_n,a\,b\,\Phi_n\rangle=\overline{\beta_{n+1}}\,\alpha_{n+1}\langle \Psi_n,(\1+qN)\,\Phi_n\rangle.
$$
Now, using (\ref{45}) and the induction assumption, we have
$$
\langle\Psi_{n+1},\Phi_{n+1}\rangle=\overline{\beta_{n+1}}\,\alpha_{n+1}\left(1+q[n]_q\right)=\overline{\beta_{n+1}}\,\alpha_{n+1} [n+1]_q,
$$
so that from (\ref{48}) we conclude that $\langle\Psi_{n+1},\Phi_{n+1}\rangle=1$, as we had to prove. Hence we conclude that  $\F_\Phi$ and $\F_\Psi$ are biorthonormal sets. It is a simple exercise to prove further that
\be
\left\{
\begin{array}{ll}
a\,\Phi_0=0, \hspace{1cm} a\,\Phi_n=\alpha_n\,[n]_q
\Phi_{n-1}, \hspace{.8cm} n\geq1\\
b\,\Phi_n=\frac{1}{\alpha_{n+1}}\,\Phi_{n+1}, \hspace{3.7cm} n\geq0\\
b^\dagger\,\Psi_0=0, \hspace{1cm} b^\dagger\,\Psi_n=\gamma_n\,\overline{[n]_q}\,
\Psi_{n-1}, \quad n\geq1\\
a^\dagger\,\Psi_n=\frac{1}{\gamma_{n+1}}\,\Psi_{n+1}, \hspace{3.7cm} n\geq0.
\end{array}
\right.
\label{410}\en 
Notice that (\ref{45}) can also be deduced from these ladder equations. As in \cite{bagspringerbook} we could assume, if needed\footnote{Of course, this should be checked in concrete situations, since it is not automatic.}, that  $\F_\Phi$ and $\F_\Psi$ are $\G$-quasi bases:
\be
\langle f,g\rangle=\sum_n\langle f,\Phi_n\rangle\langle \Psi_n,g\rangle=\sum_n\langle f,\Psi_n\rangle\langle \Phi_n,g\rangle,
\label{411}\en
$\forall f,g\in\G$, a suitable dense subspace of $\Hil$. In \cite{bagspringerbook,baginbagbook} it is widely discussed the consequences of this property, which generalizes the Parceval identity for orthonormal bases, and many concrete physical systems which admit $\G$-quasi bases of eigenvectors of some Hamiltonian are discussed. For completeness, we give here an example of how the operators $a$ and $b$ look like in concrete cases, referring to \cite{bagquons} for more details (including the spaces where these operators act, and so on):
\be
a=\frac{e^{-2i\alpha x}-e^{i\alpha \frac{d}{dx}}e^{-i\alpha (x+\gamma)}}{-i\sqrt{1-e^{-2\alpha^2}}}, \quad b=\frac{e^{2i\alpha x}-e^{i\alpha (x-\gamma)}e^{i\alpha \frac{d}{dx}}}{i\sqrt{1-e^{-2\alpha^2}}},
\label{ex1}\en
for real $\gamma\neq0$. For $\gamma=0$ these operators return a pair $(c,c^\dagger)$ of ordinary quons, \cite{eremel}. We recall that, in (\ref{ex1}), $q=e^{-2\alpha^2}$ where $\alpha\in[0,\infty)$, so that $q\in]0,1]$.

\subsection{Pseudo-quons in a deformed oscillator, and their role for ALOs}

In this section we use the operators $a$ and $b$ in (\ref{41}) to construct, first, a quantum oscillator-like non self-adjoint Hamiltonian $H$ written in terms of two (again, non self-adjoint) operators which extend the position and the momentum operators, but which obey non standard commutation relations, of the kind we can find in \cite{nozari,fring2012,fring} and, more recently, in \cite{samar}. One interesting aspect of our construction is that, because of what we have seen  before in Section \ref{sectDQ}, we will be able to find eigenvalues and eigenvectors of $H$ and $H^\dagger$. Secondly we will show how $a$, $b$, $H$ and $H^\dagger$ fit the abstract settings of Section \ref{sect2}.

First of all we use  $a$ and $b$ in (\ref{41}) to define two operators $x$ and $p$ as follows:
\be
x=\alpha(b+a), \qquad p=i\beta(b-a),
\label{412}\en
where $\alpha$ and $\beta$ are, in general, complex quantities. Recalling that $N=ba$ we can rewrite
\be
[x,p]=2i\alpha\beta[a,b]=2i\alpha\beta\left(\1+(q-1)N\right).
\label{413}\en
Now, since $a=\frac{\beta x+i\alpha p}{2\alpha\beta}$ and $b=\frac{\beta x-i\alpha p}{2\alpha\beta}$ we have
$$
N=\frac{1}{4\alpha^2\beta^2}\left(\beta^2 x^2+\alpha^2p^2+i\alpha\beta[x,p]\right),
$$
which, when replaced in the right-hand side of (\ref{413}), returns the following commutator between $x$ and $p$:
\be
[x,p]=\frac{4i\alpha\beta}{q+1}\1+\frac{i(q-1)}{\alpha\beta(q+1)}\left(\beta^2x^2+\alpha^2p^2\right).
\label{414}\en
This formula, of course, makes sense if $q\neq-1$. Going back to (\ref{41}) this means that we cannot include fermions in our analysis here. Incidentally we observe that, if $q=1$ (i.e. for ordinary bosons), we go back to the well known commutation rule $[x,p]=i\1$ if we fix, as for ordinary bosons, $\alpha=\beta=\frac{1}{\sqrt{2}}$ in (\ref{412}). We refer to \cite{nozari,fring2012,fring,samar}, and references therein, for some results on commutation rules similar to the one in (\ref{414}), and for their role in  minimal length quantum mechanics. Here, we rather  introduce the operator
\be
H=\frac{1}{2}\left(\left(\frac{p}{\beta}\right)^2+\left(\frac{x}{\alpha}\right)^2\right),
\label{415}\en
which looks formally as a quantum oscillator in the rescaled variables $\frac{p}{\beta}$ and $\frac{x}{\alpha}$. Of course this is not really so since $\alpha$ and $\beta$ are, in general, complex, and $p$ and $x$ are not self-adjoint. This is also made more explicit by noticing that, using (\ref{413}) and (\ref{414}), we can rewrite
\be
H=(q+1)N+\1, \qquad \mbox{and} \qquad H^\dagger=(\overline q+1)N^\dagger+\1,
\label{416}\en
which show that the families $\F_\Phi$ and $\F_\Psi$ introduced after (\ref{43}) are indeed the eigenstates of $H$ and $H^\dagger$, respectively:
\be
H\Phi_n=\hat E_n\Phi_n,  \qquad \mbox{and} \qquad H^\dagger\Psi_n=\overline{\hat E_n}\Psi_n,
\label{417}\en
$\forall n\geq0$. Here we have
\be
\hat E_n=(q+1)[n]_q+1,
\label{418}\en
which is not necessarily real, of course. We can rewrite $\hat E_n$ as follows:
\be
\hat E_0=1, \qquad\mbox{and}\qquad \hat E_n=2[n]_q+q^n, \quad n\geq1.
\label{419}\en
Hence our pseudo-quons can be used to diagonalize both $H$ in (\ref{415}) and $H^\dagger$, where $x$ and $p$ are (non self-adjoint) operators obeying (\ref{414}).

\vspace{2mm}

What we want to do now is to show that $a$, $b$ and $H$ produce an explicit realization of what discussed in Section \ref{sect2}. For that we put, in Definition \ref{def1}, $S=b$ and $T=a$, while $H$ is the one in (\ref{416}). Hence we have, as already commented before,
$$
[H,b]=(q+1)[N,b]=(q+1)b[a,b],
$$
which is exactly (\ref{21}) with $\lambda=q+1$. Hence $(H,a,b)\in\Rc_{q+1}.$ However, in view of Theorem \ref{th1}, this $H$ does not work well, since its lowest eigenvalue is one, rather than zero. But this, as we have already observed before, is not really a major problem. It is sufficient to replace $H$ with $h=H-\1=(q+1)N$. Hence $h$ is a shifted version of (\ref{415}), and its eigenstates, and those of $h^\dagger$, are not modified by this shift, while the eigenvalues are slightly changed. Summarizing, we have
\be
h=(q+1)N, \qquad \mbox{and} \qquad h^\dagger=(\overline q+1)N^\dagger,
\label{420}\en
\be
h\Phi_n=E_n\Phi_n,  \qquad \mbox{and} \qquad h^\dagger\Psi_n=\overline{E_n}\Psi_n,
\label{421}\en
$\forall n\geq0$, with
\be
E_n=(q+1)[n]_q, \qquad n\geq0.
\label{422}\en
Since $[h,b]=(q+1)[N,b]=(q+1)b[a,b]$, we go back again to (\ref{21}) with $\lambda=q+1$, so that $(h,a,b)\in\Rc_{q+1}$. This means that Definition \ref{def1} is satisfied, and Proposition \ref{prop1} easily follows, replacing $(H,S,T,\lambda)$ with $(h,b,a,q+1)$.

The next steps are meant to show what the general settings in Section \ref{sect2} become in this particular case. We start with showing that (\ref{22}) is satisfied. Indeed we have
$$
[H,[T,S]]\rightarrow [h,[a,b]]=[(q+1)N,\1+(q-1)N]=0.
$$
Here we have used the following useful identity:
\be
[a,b]=ab-ba=(\1+qN)-N=\1+(q-1)N.
\label{423}\en
This means that we are under the assumptions of Theorem \ref{th1}: in fact $(h,a,b)\in\Rc_{q+1}$, (\ref{22}) is satisfied and all the eigenvalues of $h$, see (\ref{422}), are non degenerate. Moreover, $h\Phi_0=0$ so that the vector $\varphi_0$ in Theorem \ref{th1} exists, and coincides with the vacuum of $a$, see (\ref{42}). Next, as in (\ref{23}) we put $\varphi_n=S^n\varphi_0=b^n\Phi_0$. Notice that these vectors are proportional to the $\Phi_n$'s in (\ref{43}): $\varphi_n=(\alpha_n!)^{-1}\,\Phi_n$, $n\geq0$. We easily find that $\mu_n$ in (\ref{24}) and $E_n$ in (\ref{26}) are
\be
\mu_n=1+(q-1)[n]_q=q^n, \qquad E_n=(q+1)[n]_q,
\label{424}\en
$\forall n\geq0$. This last expression, in particular, is in agreement with (\ref{422}).

In Section  \ref{sect2} the family $\F_\psi$ has been introduced as the (only) biorthonormal basis of $\F_\varphi$. For pseudo-quons, the biorthonormality condition is given in (\ref{49}). This allows us to identify the $\psi_n$ in (\ref{27}) in terms of the $\Psi_n$ in (\ref{43}):
\be
\Phi_n=\alpha_n!\varphi_n, \qquad \mbox{and}\qquad \Psi_n=(\overline{\alpha_n!})^{-1}\psi_n,
\label{425}\en
$\forall n\geq0$. Incidentally we observe that, because of the (\ref{48}), both $\alpha_n$ and $\beta_n$ must be non zero for all $n$. In what follows it will be useful to use the following equality, which is a consequence, in particular, of (\ref{425}): 
\be
\psi_n=[(\overline{\alpha_n}\,\beta_n)!]\,(a^\dagger)^n\psi_0=[\overline{\alpha_n}\,\beta_n]\,a^\dagger\,\psi_{n-1},
\label{426}\en
$n\geq1$, where we are identifying $\psi_0$ and $\Psi_0$, since $\alpha_0!=1$. Formulas (\ref{28}) and (\ref{29}) can be easily checked and we get indeed
$$
h^\dagger\psi_n=(\overline{q}+1)\overline{[n]_q}\psi_n, \qquad b^\dagger\psi_0=0, \qquad \mbox{and, if $n\geq1$, }\qquad b^\dagger\psi_n=\psi_{n-1}.
$$
The two mean values of $[T,S]=[a,b]$ in (\ref{212}) indeed coincide and return $q^n$, $\forall n\geq0$. 

What is maybe more interesting is the fact that $(h,a,b)\in{\Rc}_{q+1}^{(s)}$, too. This is because (\ref{213}) is also satisfied: $[h,a]=(q+1)[b,a]a$. For this reason $T^\dagger$ and $T$ (i.e. $a^\dagger$ and $a$) act respectively as raising and lowering operators on $\psi_n$ and $\varphi_n$, see (\ref{215}) and (\ref{216}). In particular, it turns out that the sequence $\{\gamma_n\}$ in (\ref{215}) is simply
\be
\gamma_n=\overline{[n+1]_q},
\label{427}\en
$\forall n\geq0$. The other results in (\ref{217})-(\ref{221}) can be easily checked. We just want to observe that the coefficients in (\ref{220}) and (\ref{221}) can be simplified, in our situation, since
$$
\gamma_n-\gamma_{n-1}=q^n,
$$
$\forall n\geq1$, and the only case in qhich this do not depend on $n$ is when $q=1$, i.e. for pseudo-bosons, \cite{bagspringerbook}.

We conclude this section by noticing that not much can be said on bi-coherent states since we are here not in a position to check if bounds as those in (\ref{38}) are satisfied or not. This could be done only when replacing our abstract operators $a$ and $b$ with some explicit realization in some concrete Hilbert space.

\section{Another example: the deformed Heisenberg algebra}\label{sectDGHA}

In this section we will describe in details another class of examples fitting our construction in Section \ref{sect2}. This class arises from what has been called DGHA in \cite{bcg}, and which is essentially a (minimal) set of rules relating three operators which give the possibility to find eigenvalues and eigenvectors of one of the three operators involved.
In particular, after a short review of the results in \cite{bcg}, restated in a way which is more convenient for us, we will show how each DGHA can be seen as a special case of what described in Section 2.

Suppose we have three operators $h$, $a$, $b$, acting on the Hilbert space $\Hil$ and densely defined on a dense subspace $\D\subseteq\Hil$, stable under the action of these operators and of their adjoints. Let $f(x)$ be a real and strictly increasing function. 

\begin{defn}\label{defALO}
We say that 	$h$, $a$, $b$ and $f(x)$ obey a DGHA if the following identities are satisfied:
\be
h\,b=b\,f(h), \qquad a\,h=f(h)\,a,\qquad [a,b]=f(h)-h.
\label{51}\en 
\end{defn}
We refer to \cite{bcg} for a discussion on the definition of $f(h)$ if $h$ is unbounded and not self-adjoint. Here we will always assume that $f(h)$ is well defined and leaves $\D$ invariant, together with its adjoint $(f(h))^\dagger=f(h^\dagger)$.

\vspace{2mm}

{\bf Remark:--} We could rather assume, as already stressed several times in this paper, that all the operators involved in our construction, including $f(h)$, belong to $\Lc^\dagger(\D)$, for some $\D$, see Appendix 1.

\vspace{2mm}

Let us assume, \cite{bcg}, that two non zero vectors $\xi_0$ and $\eta_0$ exist in $\D$ such that
\be
h\xi_0=0,\qquad h^\dagger \eta_0=0.
\label{52}\en
This means that zero is an eigenvalue of $h$ and $h^\dagger$, which we are considering to be different, in general. We define the vectors
\be
\xi_n=\frac{1}{\sqrt{\epsilon_n!}}\, b^n\,\xi_0, \qquad \eta_n=\frac{1}{\sqrt{\epsilon_n!}}\, {a^\dagger}^n\,\eta_0,
\label{53}\en
$n\geq0$, where $\epsilon_0=0$ and $\epsilon_n=f(\epsilon_{n-1})$, $n\geq1$. Of course, this definition and the fact that $f(x)$ is strictly increasing, imply that
\be
0=\epsilon_0<\epsilon_1<\epsilon_2<\ldots.
\label{54}\en
In (\ref{53}) we have put,
as usual, $\epsilon_0!=1$ and $\epsilon_n!=\epsilon_1\epsilon_2\cdots\epsilon_n$, $n\geq1$. Notice that the factor $\sqrt{\epsilon_n!}$ is real, in view of what we have just discussed. Since $\D$ is stable under the action of $b$ and $a^\dagger$, it follows that $\xi_n,\eta_n\in\D$, for all $n\geq0$. The following results can be easily deduced, see also \cite{bcg}:
\begin{enumerate}
	
	\item $\xi_n$ and $\eta_n$ are respectively eigenstates of $h$ and $h^\dagger$:
\be
h\xi_n=\epsilon_n\xi_n, \qquad h^\dagger\eta_n=\epsilon_n\eta_n, 
\label{55}\en
$\forall n\geq0$. Then, since each eigenvalue is non degenerate,
\be
\langle\xi_n,\eta_m\rangle=0,
\label{56}\en
if $n\neq m$.

\item If $\F_\eta=\{\eta_n, \,n\geq0\}$ is complete in $\Hil$, then $a\xi_0=0$. Also, if $\F_\xi=\{\xi_n, \,n\geq0\}$ is complete in $\Hil$, then $b^\dagger\eta_0=0$.

\item The following ladder equations are satisfied:
\be
b\xi_n=\sqrt{\epsilon_{n+1}}\,\xi_{n+1}, \qquad a^\dagger\eta_n=\sqrt{\epsilon_{n+1}}\,\eta_{n+1},
\label{57}\en
as well as
\be
a\xi_n=\sqrt{\epsilon_{n}}\,\xi_{n-1}, \qquad b^\dagger\eta_n=\sqrt{\epsilon_{n}}\,\eta_{n-1},
\label{58}\en
$n\geq1$, together with $a\xi_0=b^\dagger\eta_0=0$, see item 2 in this list.
\item The following eigenvalue equations are satisfied:
\be
b\,a\xi_n=\epsilon_n\xi_n, \qquad a\,b\xi_n=\epsilon_{n+1}\xi_n, \qquad a^\dagger\,b^\dagger\eta_n=\epsilon_{n}\eta_n,\qquad b^\dagger\,a^\dagger\eta_n=\epsilon_{n+1}\eta_n,
\label{59}\en
$\forall n\geq0$.
\item if we choose the normalization of $\xi_0$ and $\eta_0$ such that $\langle\xi_0,\eta_0\rangle=1$, then
\be
\langle\xi_n,\eta_m\rangle=\delta_{n,m}
\label{510}\en
$\forall n,m\geq0$. Notice that this equation {\em refines} the one in (\ref{56}).

\end{enumerate}

We refer to \cite{bcg} for some examples of DGHA in a quantum mechanical context. In particular, a deformed P\"oschl-Teller and an (again, deformed) infinite square-well potentials are considered.  

\vspace{2mm}

We will now show how each DGHA fits in what discussed in Section \ref{sect2}. In particular, we will show that $a$ and $b$ in Definition \ref{defALO} are indeed ALOs.

To show this fact, we first check that taken $a$ and $b$ and $h$ in Definition \ref{defALO},  $(h,a,b)\in\Rc_1$. Indeed we have, using twice (\ref{51}),
$$
[h,b]=hb-bh=bf(h)-bh=b(f(h)-h)=b[a,b],
$$
which is (\ref{21}) with the identification $(H,T,S)\leftrightarrow(h,a,b)$, and $\lambda=1$. This means that Proposition \ref{prop1} can be applied. Moreover, (\ref{22}) is also satisfied. In fact we have, using again (\ref{51}),
$$
[H,[T,S]] \longrightarrow [h,[a,b]]=[h,f(h)-h]=0,
$$
as required in (\ref{22}). The assumptions of Theorem \ref{th1} are also satisfied:  $(h,a,b)\in\Rc_1$, the eigenvalues of $h$, the $\epsilon_n$'s, are all non degenerate, and a non zero vector $\varphi_0=\xi_0$ exists such that $h\varphi_0=0$, see (\ref{52}).

The vector $\varphi_n$ in (\ref{23}) is $\varphi_n=b^n\xi_0=\sqrt{\epsilon_n!}\,\xi_n$, see (\ref{53}). The coefficient $\mu_n$ in (\ref{24}) can be easily computed:
$$
[T,S]\varphi_n\longrightarrow [a,b]\left(\sqrt{\epsilon_n!}\,\xi_n\right)=\sqrt{\epsilon_n!}(f(h)-h)\xi_n=(\epsilon_{n+1}-\epsilon_n)\left(\sqrt{\epsilon_n!}\,\xi_n\right),
$$
where we have also used that $f(\epsilon_n)=\epsilon_{n+1}$. Hence $\mu_n=\epsilon_{n+1}-\epsilon_n$, $\forall n\geq0$. Using this expression in (\ref{26}), and recalling that $\lambda=1$, we get $E_n=\sum_{k=0}^{n-1}\mu_k=\epsilon_n$, in agreement\footnote{Since $\xi_n$ and $\varphi_n$ differ only for a normalization, $\xi_n$ is an eigenstate of $h$ with eigenvalue $\epsilon_n$ if and only if $\varphi_n$ is an eigenstate of $h$ with the same eigenvalue.} with (\ref{55}). 

If $\F_\varphi$ is a basis, it admits an unique biorthonormal basis, \cite{chri,heil}, $\F_\psi=\{\psi_n\}$. In view of (\ref{510}), its vectors can be easily identified: $\psi_n=\frac{1}{\sqrt{\epsilon_n!}}\,\eta_n$, $\forall n\geq0$, and it is clear that these are eigenvectors of $h^\dagger$, with eigenvalue $\epsilon_n$: $h^\dagger\psi_n=\epsilon_n\psi_n$, in agreement with (\ref{28}). The lowering equations in (\ref{29}) are also easily checked. For instance, if $n\geq1$,
$$
S^\dagger\psi_n\longrightarrow b^\dagger \left(\frac{1}{\sqrt{\epsilon_n!}}\,\eta_n\right)=\frac{1}{\sqrt{\epsilon_{n-1}!}}\,\eta_{n-1}=\psi_{n-1},
$$
as expected. In this derivation we have used (\ref{58}). As for the equality in (\ref{212}) we can check that both sides are indeed equal to $\mu_n$. What is more relevant, is to check if $(h,a,b)$ also belongs to $\Rc_1^{(s)}$, in the sense of Definition \ref{def2}. This is true, in fact:
$$
[h,a]=ha-ah=ha-f(h)a=(h-f(h))a=[b,a]a,
$$ 
using twice (\ref{51}). As a consequence of this property, the last part of Section \ref{sect2} holds for DGHA. In particular, the coefficient $\gamma_n$ in (\ref{215}) can be identified:
$$
T^\dagger\psi_n\longrightarrow a^\dagger \left(\frac{1}{\sqrt{\epsilon_n!}}\,\eta_n\right)=\frac{1}{\sqrt{\epsilon_n!}}\,\sqrt{\epsilon_{n+1}}\,\eta_{n+1}=\epsilon_{n+1}\psi_{n+1},
$$
using (\ref{57}). Hence $\gamma_n=\epsilon_{n+1}$, $\forall n\geq0$, and formulas (\ref{216})-(\ref{219}) easily follow. 

For our DGHA something can be deduced for the difference $\gamma_n-\gamma_{n-1}$ in (\ref{220}) and (\ref{221}). Indeed we have $\gamma_n-\gamma_{n-1}=\epsilon_{n+1}-\epsilon_n=f(\epsilon_n)-\epsilon_n$, which is independent of $n$ if $f(x)=x+k$, $k\in\mathbb{R}$, but not with other choices of $f(x)$. This is not surprising, since  this expression of $f(x)$ implies that $[a,b]=f(h)-h=k\1$, which means that, for instance, $\frac{1}{k}a$ and $b$ are pseudo-bosonic operators, \cite{bagspringerbook}, for which all the results deduced here are well known.

As already stressed for pseudo-quons, we notice that not much can be said on bi-coherent states since we are here not in a position to check if bounds as those in (\ref{38}) are satisfied or not. This could be done only when replacing our abstract operators $a$, $b$ and $h$, and the function $f$, with some explicit realization in some concrete Hilbert space.

\section{Conclusions}\label{sectconcl}

We have discussed here the role of ALOs in the analysis of some physical system driven by a non self-adjoint Hamiltonian.

In particular, also in view of the examples discussed in Sections \ref{sectDQ} and \ref{sectDGHA}, the approach proposed here appears to be rather general, since it covers these examples, and all those which were originally considered in \cite{bagALO,bagquons,bcg}, as well as many others in which the pseudo-quons are replaced by ordinary quons or by pseudo-bosons, which are both particular cases of the pseudo-quons considered in Section \ref{sect2}. Of course, looking for more applications is quite interesting and, in view of the generality of the method, we hope to find other interesting examples which can be algebraically treated in terms of our ALOs.

It is clear that our analysis of bi-coherent states here is just preliminary. We plan to go back soon to these states with concrete examples, and possibly with physical applications. 

We should also mention that a side result in our analysis consists in the possibility of extending $q$-onic commutation relations, see (\ref{41}), to complex values of $q$. As we have commented, this is not possible for ordinary quons. Also, pseudo-quons can be used to diagonalize a non self-adjoint oscillator-like Hamiltonian, see (\ref{415}), involving position and momentum operators  satisfying a commutation rule, see (\ref{414}), of the kind one meets when dealing with minimal length quantum mechanics. We believe that this aspect deserves a  deeper analysis, too, in a close future, also in connection with the consequences for uncertainty relations.

\section*{Acknowledgements}

The author acknowledges partial support from Palermo University and from G.N.F.M. of the INdAM, and from PRIN Project {\em  Transport phenomena in low
	dimensional structures: models, simulations and theoretical
aspects}.

\renewcommand{\theequation}{A.\arabic{equation}}

\section*{Appendix 1: $O^*$-algebras}\label{appendix}

Let us briefly review how  $\Lc^\dagger(\D)$ can be introduced, and why it is so relevant for us.  We refer to \cite{aitbook,trrev,bagrev2007} for many results on $*$-algebra, { quasi $*$-algebras}, and $O^*$-algebras. In particular, we have:

\begin{defn}\label{o*}Let $\mathcal{H}$ be a separable Hilbert space and $N_0$ an
	unbounded, densely defined, self-adjoint operator. Let $D(N_0^k)$ be
	the domain of the operator $N_0^k$, $k \ge 0$, and $\mathcal{D}$ the domain of
	all the powers of $N_0$, that is,  $$ \mathcal{D} = D^\infty(N_0) = \bigcap_{k\geq 0}
	D(N_0^k). $$ This set is dense in $\mathcal{H}$. We call
	$\mathcal{L}^\dagger(\mathcal{D})$ the $*$-algebra of all \textit{  closable operators}
	defined on $\mathcal{D}$ which, together with their adjoints, map $\mathcal{D}$ into
	itself. Here the adjoint of $X\in\mathcal{L}^\dagger(\mathcal{D})$ is
	$X^\dagger=X^*_{| \mathcal{D}}$. $\mathcal{L}^\dagger(\mathcal{D})$ is called  an $O^*$-algebra.
\end{defn}

In $\mathcal{D}$ the topology is defined by the following $N_0$-depending
seminorms: $$\phi \in \mathcal{D} \rightarrow \|\phi\|_n\equiv \|N_0^n\phi\|,$$
where $n \ge 0$, and  the topology $\tau_0$ in $\mathcal{L}^\dagger(\mathcal{D})$ is introduced by the seminorms
$$ X\in \mathcal{L}^\dagger(\mathcal{D}) \rightarrow \|X\|^{f,k} \equiv
\max\left\{\|f(N_0)XN_0^k\|,\|N_0^kXf(N_0)\|\right\},$$ where
$k \ge 0$ and   $f \in \mathcal{C}$, the set of all the positive,
bounded and continuous functions  on $\mathbb{R}_+$, which are
decreasing faster than any inverse power of $x$:
$\mathcal{L}^\dagger(\mathcal{D})[\tau_0]$ is a {   complete *-algebra}.

The relevant aspect of $\LD$ is that, \cite{aitbook,trrev,bagrev2007}, if $x,y\in \mathcal{L}^\dagger(\mathcal{D})$, we can multiply them and the results, $xy$ and $yx$, both belong to $\mathcal{L}^\dagger(\mathcal{D})$, as well as their difference, the commutator $[x,y]$. Also, powers of $x$ and $y$ all belong to $\Lc^\dagger(\D)$, which is therefore a good framework to work with, also in presence of unbounded operators. For instance, if $N_0=a^\dagger a$, where $[a,a^\dagger]=\1$, we can prove that $a, a^\dagger\in\Lc^\dagger(\D)$. Hence $N_0\in  \Lc^\dagger(\D)$ as well. This is also true for pseudo-bosonic operators, \cite{bagrus2018}, at least if $\D=\Sc(\mathbb{R})$.

\renewcommand{\theequation}{A.\arabic{equation}}

\section*{Appendix 2: Graphene}\label{appendix2}

The Hamiltonian for the two Dirac points $K$ and $K'$ can be written as, see \cite{geim},
\be
H_D=\left(
\begin{array}{cc}
	H_K & 0 \\
	0 & H_{K'} \\
\end{array}
\right),
\label{b1}\en
with
\be
H_K=v_F\left(
\begin{array}{cc}
	0 & p_x-ip_y+\frac{eB}{2}(y+ix) \\
	p_x+ip_y+\frac{eB}{2}(y-ix) & 0 \\
\end{array}
\right),
\label{b2}
\en
and $H_{K'}=H_K^T$. Here $x,y,p_x$ and $p_y$ are the usual self-adjoint, two-dimensional position and momentum operators: $[x,p_x]=[y,p_y]=i\1$, all the other commutators being zero.  $\1$ is the identity operator in the Hilbert space $\Kc=\Lc^2(\Bbb R^2)$, whose scalar product is indicated as $\pin{.}{.}$. The factor $v_F$ is the Fermi velocity.

We then put $\xi=\sqrt{\frac{2}{eB}}$, 
$
X=\frac{1}{\xi}x$, $Y=\frac{1}{\xi}y$, $P_X=\xi p_x$,  $P_Y=\xi p_y$, and then
 $a_X=\frac{X+iP_X}{\sqrt{2}}$, $a_Y=\frac{Y+iP_Y}{\sqrt{2}}$, and still
\be
A_1=\frac{a_X-ia_Y}{\sqrt{2}},\qquad A_2=\frac{a_X+ia_Y}{\sqrt{2}}.
\label{b3}\en
The following commutation rules are satisfied:
\be
[a_X,a_X^\dagger]=[a_Y,a_Y^\dagger]=[A_1,A_1^\dagger]=[A_2,A_2^\dagger]=\1,
\label{b4}
\en
the other commutators being zero. Then we have:
\be
H_K=\frac{2iv_F}{\xi}\left(
\begin{array}{cc}
	0 & A_2^\dagger \\
	-A_2 & 0 \\
\end{array}
\right),
\label{b5}\en
which is manifestly Hermitian: $H_K=H_K^\dagger$. Since $H_K$ does not depend on $A_1$ and $A_1^\dagger$, its eigenstates must be degenerate. Let $e_{0,0}\in \Kc$ be the non zero vacuum of $A_1$ and $A_2$: $A_1e_{0,0}=A_2e_{0,0}=0$. Then we introduce
\be
e_{n_1,n_2}=\frac{1}{\sqrt{n_1!n_2!}}(A_1^\dagger)^{n_1}(A_2^\dagger)^{n_2}e_{0,0},
\label{36}
\en
and the set $\EE=\left\{e_{n_1,n_2},\, n_j\geq0\right\}$. $\EE$ is an o.n. basis for $\Kc$.  However, to deal with $H_K$, it is convenient to work in $\Kc_2=\Kc\oplus\Kc$, the direct sum of $\Kc$ with itself:
$$
\Kc_2=\left\{f=\left(
\begin{array}{c}
	f_1 \\
	f_2 \\
\end{array}
\right),\quad f_1,f_2\in\Kc
\right\}.
$$
In $\Kc_2$ the scalar product $\pin{.}{.}_2$ is defined as usual:
\be
\pin{f}{g}_2:=\pin{f_1}{g_1}+\pin{f_2}{g_2},
\label{37}\en
and the square norm is $\|f\|_2^2=\|f_1\|^2+\|f_2\|^2$, for all $f=\left(
\begin{array}{c}
	f_1 \\
	f_2 \\
\end{array}
\right)$, $g=\left(
\begin{array}{c}
	g_1 \\
	g_2 \\
\end{array}
\right)$ in $\Kc_2$. We now introduce the vectors
\be
v_{n_1,0}=\left(
\begin{array}{c}
	e_{n_1,0} \\
	0 \\
\end{array}
\right), \qquad \mbox{and}\qquad
v_{n_1,n_2}^{(\pm)}=\frac{1}{\sqrt{2}}\left(
\begin{array}{c}
	e_{n_1,n_2} \\
	\mp i e_{n_1,n_2-1} \\
\end{array}
\right), \quad \mbox{if } n_2\geq1,
\label{b10}\en
$\forall n_1\geq0$.
Now, we call
$\V_2=\{v_{n_1,n_2}^{(k)},\,n_1\geq0,\,n_2\geq1,\,k=\pm\}\cup\{v_{n_1,0},\,n_1\geq0\}$.
It is easy to check that these vectors are mutually orthogonal, normalized in $\Kc_2$, and total. Hence, $\V_2$ is an o.n. basis for $\Kc_2$. Its vectors are eigenvectors of $H_K$:
\be
H_K v_{n_1,0}=0, \quad H_K v_{n_1,n_2}^{(+)}=E_{n_1,n_2}^{(+)}  v_{n_1,n_2}^{(+)}, \quad   H_K v_{n_1,n_2}^{(-)}=E_{n_1,n_2}^{(-)}  v_{n_1,n_2}^{(-)},
\label{b11}\en
where $E_{n_1,n_2}^{(\pm)}=\pm \frac{2v_F}{\xi}\,\sqrt{n_2}$. This shows that the discrete spectrum of $H_K$ is not bounded from below (neither from above). For this reason, it is impossible to shift the Hamiltonian to get a new operator $H_K+\gamma\1_2$ (here $\1_2$ is the identity operator on $\Kc_2$) with a purely positive set of eigenvalues.

\end{document}